\begin{document}
%

\title{
A DNS Tunnel Sliding Window Differential Detection Method Based on Normal Distribution Reasonable Range Filtering
}

\author{
\IEEEauthorblockN{$Xin\ Ma^{1}$, $Shize\ Guo$, $Zhisong\ Pan^{1}$, $Bin\ Liu$, $Kaolin\ Jiang^{1}$, $Ming\ Chen$, $Shijiao\ Tang^1$}
\IEEEauthorblockA{$^{1}$Command and Control Engineering College, Army Engineering University of PLA, Nanjing 210007, China\\
xin.ma0206@gmail.com, szguo.uestc@gmail.com, hotpzs@hotmail.com,\\
 benliuchina@hotmail.com, jiangkolin@foxmail.com, chenming@163.com, shijiaotang2020@163.com
}


%


\IEEEoverridecommandlockouts
\makeatletter\def\@IEEEpubidpullup{9\baselineskip}\makeatother
\hspace{\columnsep}\makebox[\columnwidth]{}}

\maketitle

\begin{abstract}

A covert attack method often used by APT organizations is the DNS tunnel, which is used to pass information by constructing C2 networks. 
And they often use the method of frequently changing domain names and server IP addresses to evade monitoring, which makes it extremely difficult to detect them.
However, they carry DNS tunnel information traffic in normal DNS communication, which inevitably brings anomalies in some statistical characteristics of DNS traffic, so that it would provide security personnel with the opportunity to find them.
Based on the above considerations, this paper studies the statistical discovery methodology of typical DNS tunnel high-frequency query behavior. Firstly, we analyze the distribution of the DNS domain name length and times and finds that the DNS domain name length and times follow the normal distribution law.
Secondly, based on this distribution law, we propose a method for detecting and discovering high-frequency DNS query behaviors of non-single domain names based on the statistical rules of domain name length and frequency and we also give three theorems as theoretical support.
Thirdly, we design a sliding window difference scheme based on the above method.
Experimental results show that our method has a higher detection rate.
At the same time, since our method does not need to construct a data set, it has better practicability in detecting unknown DNS tunnels.
This also shows that our detection method based on mathematical models can effectively avoid the dilemma for machine learning methods that must have useful training data sets, and has strong practical significance.

\end{abstract}



%

\textbf{ Keywords:} DNS tunnel, normal distribution, abnormal points, reasonable range

\section{Introduction}


DNS tunnels \cite{DNS} have become one of the threats to network security. They are usually used by malware, new Trojan horses \cite{luo2017} and APT organizations, etc to transfer information or issue remote control instructions. In real-life scenarios, people usually deploy security devices at the boundaries of corporate networks to detect DNS tunnels.

There are some research about the detection of DNS tunnels, which are mainly relied on machine learning or statistical laws.
For example, Ahmed \cite{almusawi2018dns} uses SVM model to detect DNS tunnel. Born \cite{Born2010Detecting} \cite{Born2010NgViz} and Homem \cite{Homem2018Information} use information entropy to detect DNS tunnel by the confusion degree of domain name letters.
But, to our best knowledge, these existing methods, whether based on machine learning or statistical rules, usually need a labeled database to train models or adjust parameters. However, the DNS tunnel data used in their methods are usually constructed by open source software. As Yassine summarizes DNS tunnel detection methods in the work of \cite{Yassine2018}, people usually use different data sets, because there are few public DNS data sets. We will also introduce them in the next section.

The purpose of malicious authors using DNS tunnels is to pass information, this inevitably generates information transmission bandwidth, that is, a certain amount of DNS tunnel messages are always required to pass information, except very special cases, and only in which, no information or very little information be output. Therefore, in normal cases, a large number of DNS packets contain a certain amount of DNS tunnel packets, which will inevitably cause statistical anomalies.
Further considering the information transmission bandwidth, this requires a certain period and a large amount of data, which is suitable for deployment at the network exit of large enterprises. At the same time, there will also be high-frequency visits to some popular websites in these large enterprise traffic.


At the same time, since we are studying the network exit traffic of large enterprises, it will generate a lot of DNS access every day. However, DNS tunnel data is often constructed using known tunnel software, which is far from DNS tunnel data in real cyber threats. Faced with so many unknown DNS traffic, how to solve the DNS tunnel detection problem in the real environment requires us to propose a new solution.

Based on the above considerations, we propose our detection scheme.
In this paper, we mainly focus on the analysis of the general law of DNS traffic in the large enterprise network. We find that the domain name length and frequency obey the Normal Distribution. Based on this, we put forward some hypothesis, and use the theory to prove the hypothesis by drawing on the idea of the KS test \cite{ks}. 
Then, we use these theories to count out the reasonable range of fitting curve for filtering the abnormal point.
By doing so, we propose a theoretical detection model, and construct a sliding window differential detection framework.



The contributions of this paper are mainly on the following aspects:

Firstly, we find that the length and frequency of domain names roughly obey the normal distribution law approximately. A reasonable explanation is that people tend to use domain names of a certain length range. Short and long domain names are both rare, which will make a high in the middle of the curve and a low at both ends.


Secondly, we propose a theoretical detection model based on 5 hypotheses and 3 theorems. The model build a reasonable range of fitting curve for filtering burr points, that is abnormal point beyond the reasonable range.


Thirdly, based on the theoretical detection model, We propose a sliding window differential detection framework that make difference set on two adjacent windows for sudden appearing DNS.
It further improves detection efficiency and reduces false positive rate.


By testing, our framework can effectively detect DNS tunnel traffic in the actual complex network environment and achieve good results.



\section{Related work}

We are targeting the large enterprises DNS traffic from national project, which need us to provide DNS tunnel detection in its network boundary. This is different from the work of most researchers, and the biggest difference is the source of the data. Our data is a large amount of real network data, and it is unknown.
For this reason, we also comb and analyze some existing research methods on common DNS tunnels, to provide a reference. Next, we will give a brief introduction from three aspects: data source, data processing and detection model.

\subsubsection{Data Source}
Because it is difficult to mark DNS tunnel traffic after collecting actual network DNS traffic, there are few open DNS traffic data sets  until now. Meanwhile, DNS tunnel technology continues to update and develop, so that potential undetected DNS tunnel traffic is difficult to identify. 

In related research, the DNS traffic dataset is built by mixing normal DNS traffic captured and DNS tunnel traffic generated by open sources DNS tunnels software, such as iodine, dns2tcp, and tcp2dns.
Zhang \cite{zhang2013} and Berg \cite{Berg2019} use open source tools and campus traffic to construct DNS traffic data sets. 
Luo \cite{luo2017} uses their own DNS tunnel tool to construct DNS tunnel traffic and uses this traffic to determine the effectiveness of the algorithm. 
Ahmed \cite{almusawi2018dns} uses iodine to construct DNS tunnel traffic based on the raw type and null type. 
In the work of \cite{Ahmed2019}, they construct benign dataset by  Majestic publisher, which is like the Alexa website.
By constructing the DNS traffic data set, on the one hand, the problem of lacking experimental data to verify the algorithm has been solved. But on the other hand, the realness of DNS tunnel traffic is insufficient. 
Nowadays, network attack is not only confined to the usual hacker but also these planned and purposeful international organizations. 
These special attacks using DNS tunnel technology may be more concealed and difficult to detect. 
Some algorithms validated on general datasets are not effective against such attacks. Also, the high hidden DNS tunnel traffic is often difficult to mark, which make it difficult to construct an effective training set.

\subsubsection{Data Processing}
In the traditional DNS tunnel detection area, there are two main categories: load analysis and flow detection \cite{luo2017}.

DNS load detection mainly uses the agreement of DNS protocol to analyze the data in DNS data package. The principle is that in the DNS data structure, abnormal data is embedded in the normal message structure. Once the DNS message arrives at the attacker's DNS server, the attacker could utlize the embedded data to make a communication. 
Patrick Butler \cite{Butler2011Quantitatively} regards domain names with more than 52 characters in DNS messages as one of the features of DNS tunnel recognition. 
Qi \cite{Cheng2013A} use binary grammar word frequency to detect the frequency of domain name letters in DNS messages. They find that the normal domain names obey Zipf's law while the domain names in DNS tunnels follow a random distribution. 
Bilge \cite{bilge2011exposure} regards detecting the percentage of the longest meaningful substrings in domain names as one of the important features of detecting malicious domain names. 
Skoudis \cite{alshaikhdeeb2015integrating} analyzes the transmission efficiency of the DNS tunnel and find that the abnormal request message would be larger than the normal request message. Based on this, the size of the request message is regarded as the tunnel feature. 
Bilge \cite{bilge2011exposure} et al. analyzes TTL attributes in DNS messages, and the abnormal TTL values of DNS traffic are usually small, which are regarded as characteristics. 

DNS traffic monitoring mainly detects the changes in DNS traffic in the network.
The principle is that when DNS tunnel transmit data, it would generate a large number of DNS messages. This also lead some abnormal phenomena of DNS traffic in the network.
Luo\cite{luo2017} regards the time difference between the first DNS message and the last DNS message as the length of a DNS session and regards the total number of data packets in the DNS session as one of the DNS tunnel characteristics. 

\subsubsection{Detection Model}
Existing detection model mainly include machine learning methods and statistical analysis methods.
Machine learning methods are mainly supervised classification methods.
Ahmed \cite{almusawi2018dns} uses the multi-label SVM method to train the classifier, which improves the classification accuracy.
Wu \cite{zhang2013} proposes an improved stochastic forest decision algorithm to improve the generalization ability of a single classifier.
Saeed \cite{Shafieian2017Detecting} employs an ensemble of machine learning algorithms to detect DNS tunnel. The algorithms used are: Random Forests, K-Nearest and Multi-layer perception.
In the work of \cite{Yassine2018}, they do a survey of DNS tunnel detection techniques, which mainly contains SVM, Naïve Bayes(NB), Decision Tree(DT), and K-nearest Neighbor(KNN) etc.
Lai \cite{Lai2018} proposes a DNS tunnel detector constructed with neural network, which can directly input the first 512 bytes from DNS packet.
Unsupervised machine learning methods that is highly dependent on data characteristics are seldom studied. 
Because these data characteristics are highly uncertain, it is difficult to judge whether or not a DNS tunnel only from several dimensions. 

Statistical analysis is mainly based on the statistical characteristics of DNS traffic messages. There are always differences between normal DNS traffic distribution and abnormal DNS traffic distribution.
By analysing these differences, data are filtered to achieve the target of classification.
Cheng \cite{Cheng2013A} analyzes the distribution of binary characters in domain names and designs a scoring mechanism of binary character frequency. When the domain name score of DNS messages exceeds the threshold, it is determined to be abnormal.
Statistical analysis method highly depends on the construction of the mathematical model, and the quality of model construction is closely related to the judgment accuracy. 
Ellens \cite{Ellens2013Flow} derives 8 traffic-based variables from the IP traffic as an indication of the DNS tunnel, and then uses the threshold method / Brodsky-Darkhovsky method and the distribution-based method to detect the DNS tunnel.

A lot of existing work is based on artificially constructed DNS data, and on this basis, a detection model is built. This will affect the use of DNS tunnel detection in real scenarios. At the same time, machine learning is mainly constrained by the existence of real DNS tunnel data, because real DNS tunnels are not only built by several well-known open-source tunneling tools, but also unknown malware tools. This is also an issue that requires extreme attention and needs to be addressed. Our method relies on DNS traffic in the real environment, and we use statistical rules, which can effectively avoid the being blocked problem of practical applications  due to unreal data.





\section{Methodology Section}

In view of the problem of detecting DNS tunnels at the large enterprises network boundary, we have done a lot of DNS data analysis and found some interesting phenomena. We try to treat it as a starting point and build a theoretical detection model.

\subsection{Phenomena Analysis}
A large enterprise network usually contains several users. The DNS traffic captured in a period of time is mostly generated by these users' online behavior. In addition, there is also some DNS traffic generated by application software. 

Firstly, we analyze the relationship between the length and frequency of domain names over a period of time and find an interesting phenomena that they are similar to the normal distribution curve. 

Is this an accidental phenomenon?

Moreover, we download the top 1 million domain names from the Alexa website and express the length of domain names by the x-axis, and the number of occurrences of domain names by y-axis. After statistical analysis and data fitting, it is found that the length of domain names obeys normal distribution \cite{distribution}, as shown in Figure \ref{fig:top_1m}. 

\begin{figure}[h]
    \centering
    \includegraphics[width=.45\textwidth]{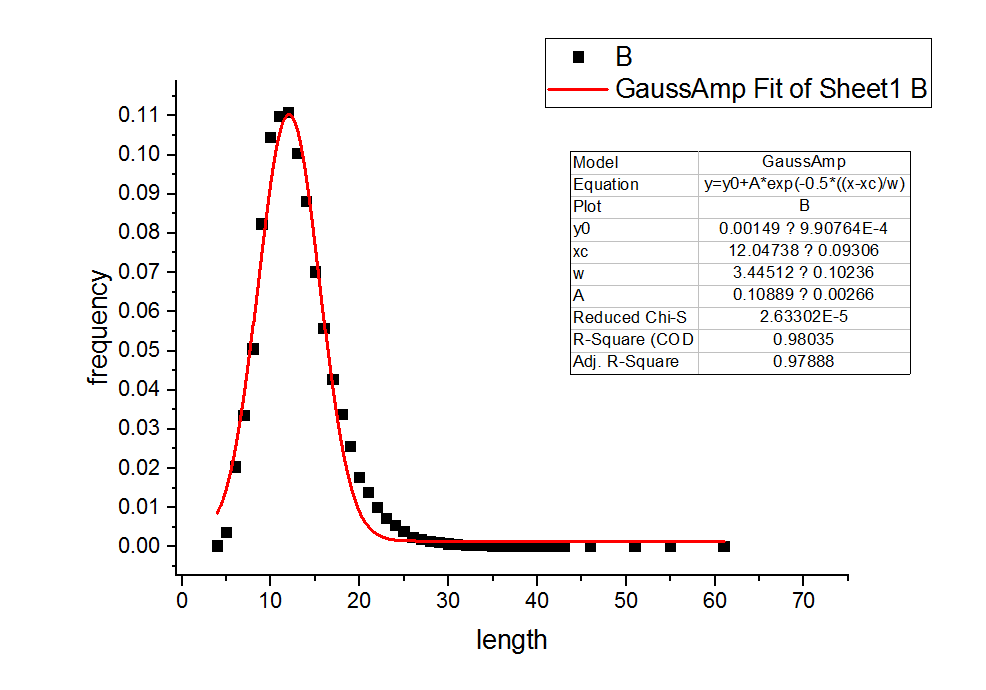}
    \caption{Curve fitting map by Gauss} 
    \textbf{The Figure is fitted by top 1 million domain name length and frequency in Alaxa website using Origin software.}
    \label{fig:top_1m} 
\end{figure}

Figure \ref{fig:top_1m} shows that the length and frequency of domain names fit the normal distribution curve well. In-depth analysis, domain names are applied by companies or users. People tend to use an easy-to-remember, common vocabulary and short spelling as domain names for ease of use. Conversely, very short domain names or very long domain names are always uncommon. 
This at least shows that it is reasonable to fit these data with a normal distribution.
But according to mathematical knowledge, we can know that as long as the parameters are set properly, there will be other curves that can fit these data, such as the GSCA curve in Figure \ref{fig:GCAS2} and polynomial curve in Figure \ref{fig:poly}.
Which curve fitting is used to analyze these data is not particularly important for us, because our focus is on finding some abnormal points. Here we choose to use the normal distribution to analyze these data.

\begin{figure}[h]
    \centering
    \includegraphics[width=.45\textwidth]{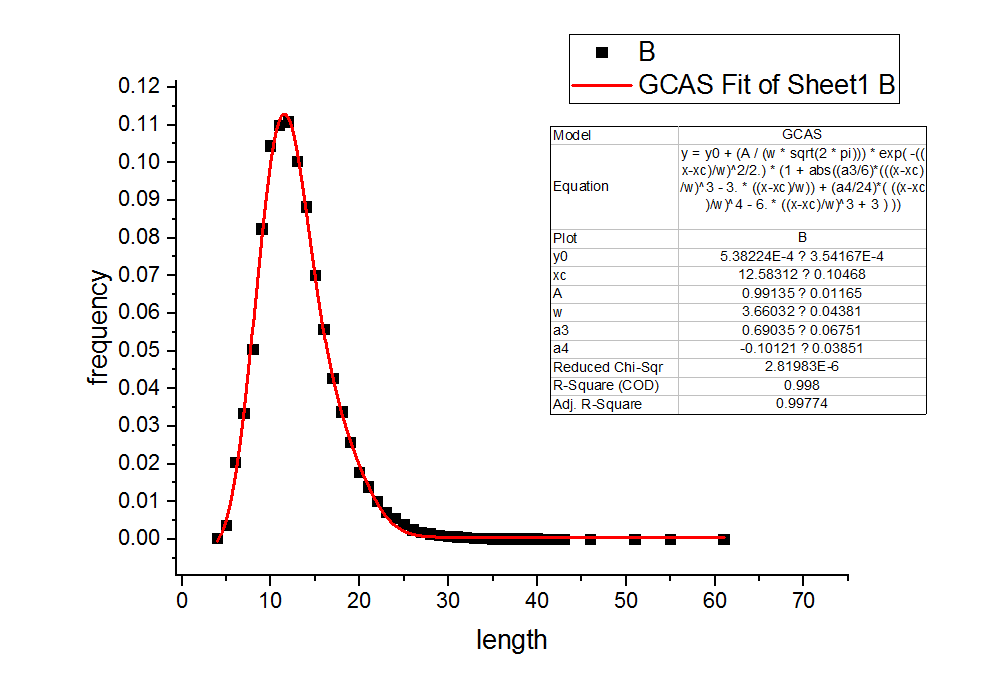}
    \caption{Curve fitting map by GCAS} 
    \textbf{The Figure is fitted by top 1 million domain name length and frequency in Alaxa website using Origin software.}
    \label{fig:GCAS2} 
\end{figure}

\begin{figure}[h]
    \centering
    \includegraphics[width=.45\textwidth]{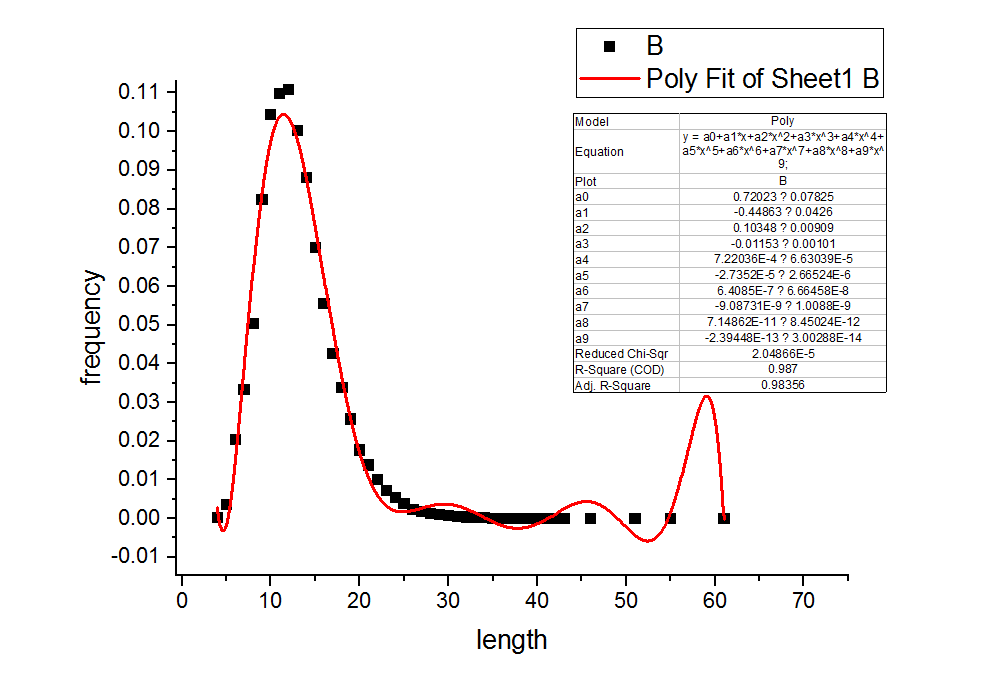}
    \caption{Curve fitting map by Poly} 
    \textbf{The Figure is fitted by top 1 million domain name length and frequency in Alaxa website using Origin software.}
    \label{fig:poly} 
\end{figure}

Secondly, we also found another interesting phenomenon that there will always be abnormal points on the fitting curve of domain name length and frequency, that is, points beyond the reasonable range of fitting curve.

We deeply analyzed the DNS traffic at these abnormal points, and were surprised to find that these abnormal points are usually made due to the existence of a large number of DNS access, in which often contains malicious DNS tunnel traffic. In other words, it is the existence of DNS tunnels that results in a large number of DNS domain name accesses, thus generating anomalies.

\subsection{Raising Thought}
Based on the above phenomena analysis, we may consider that if we can find the fitting curve of traffic flow, then we can screen out its abnormal points through reasonable range. 

And then we can find out the suspected DNS traffic flow through the abnormal points. 

Finally, we can further find the real DNS tunnel traffic through multi-dimensional verification methods.






\subsection{Putting Hypothesis}
If we consider all the sample spaces of DNS domain names as a whole, then a sample space of DNS domain names generated by an internal network can be regarded as a subset of the whole. 
We may make the following assumptions: 

\subsubsection{First} 
We can use a normal distribution to fit the sample space of all DNS domain names.
Note: In this paper, sample space is named domain name sample space(\textbf{DNSS}) that records the same DNS domain name only once. 

\subsubsection{Second} the sample space of DNS domain names generated by a normal internal network also obeys the normal distribution. 

\subsubsection{Third} 
in a normal internal network,  the sample space of domain names visited is certain within a period of time. 
Assuming that each domain name is accessed randomly and averagely 1 to M times, the sample space of DNS domain name constituted by this way would still obeys normal distribution.
Note: In this paper, sample space is referred to as accessing domain name sample space(\textbf{ADNSS}) that count the occurrence number of DNS domain names. 

\subsubsection{Fourth} 
influenced by user preferences, reverse queries, well-known websites and DNS tunnels, abnormal points which is defined as burrs will inevitably show out on the curve under the premise of third hypothesis.
So as long as we can screen out burrs, the burr traffic will inevitably contain abnormal DNS traffic. 

\subsubsection{Fifth} 
further, because the suspicious DNS domain name always appears suddenly, which can be used into the sliding window differential framework. Then, by employing multi-dimension verification methods, the suspicious DNS traffic can be screened out. 

\subsection{Theoretical Proof}

\newtheorem{definition}{define}[section]
\newtheorem{theorem}{theorem}
\newtheorem{proof}{proof}

We first give some definitions.

Define $e$ as the DNS domain name generated by user browser or software application. 

Define $S=\{e\}$ as the sample space of network DNS traffic.

Define a random variable $X=X(e)$ to represent the length of $e$.
\begin{theorem}
Randomly extract a part of $e$ from $S$ to form a new sample space $S^{'}$ where $S'  _{\subseteq} S$, then in the sample space $S^{'}$, $X$ still obeys the same distribution. Assuming that X obeys the distribution in sample space S. 
\end{theorem}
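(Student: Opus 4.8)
The plan is to formalize ``randomly extract'' as an \emph{unbiased} selection that is statistically independent of the value $X(e)$, and then to show that conditioning on membership in $S'$ leaves the distribution of $X$ unchanged. The intuition is that if the rule deciding whether a domain name is kept never inspects its length, then the surviving lengths are a faithful miniature of the original population.

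First I would attach to each $e \in S$ an indicator $I(e) \in \{0,1\}$, with $I(e)=1$ exactly when $e$ is drawn into $S'$. Calling the extraction random/uniform means $P(I(e)=1)=p$ for a common $p \in (0,1]$ and, crucially, that $I(e)$ is independent of $X(e)$. Write $F(x)=P(X \le x)$ for the assumed distribution of $X$ on $S$, which by hypothesis is the fitted normal CDF.

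Then I would compute the law of $X$ inside $S'$, i.e. the conditional distribution given $I=1$. By the independence of $I$ and $X$ we have $P(X \le x,\, I=1)=P(X \le x)\,P(I=1)$, so the selection factor cancels and
\[
P\bigl(X \le x \mid I = 1\bigr) = F(x).
\]
Since the conditional CDF equals $F$ for every $x$, the variable $X$ restricted to $S'$ obeys the same distribution as on $S$, which is the claim; the identical cancellation works at the level of densities or moment generating functions if one prefers.

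The main obstacle is not the algebra but pinning down the independence/unbiasedness assumption, since that is the only place the word ``randomly'' does any work: a rule favoring short or long names would skew $F$ and the theorem would fail, so I would state this independence explicitly among the hypotheses. A secondary point is the gap between the population distribution and a \emph{finite} drawn sample, which reproduces $F$ only in expectation; here I would invoke the KS-test viewpoint already used in the paper, noting that the empirical CDF of $S'$ converges uniformly to $F$ as $|S'|$ grows, so the equality holds exactly for the underlying distribution and asymptotically for the sampled one.
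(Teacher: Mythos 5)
Your proof is correct, and it reaches the same conclusion as the paper's but by a noticeably cleaner and somewhat different mechanism. The paper's own argument simply reinterprets the extraction of $S'$ as performing $M$ independent random trials of $X$ and then asserts that ``the statistical results of observation values must be consistent with the distribution law of random variable $X$'' --- essentially an appeal to the law of large numbers stated in one sentence, with no explicit hypothesis about how the extraction interacts with $X$. You instead model the selection by an indicator $I(e)$ with $P(I=1)=p$ independent of $X(e)$ and show that $P(X \le x \mid I=1) = F(x)$ by cancellation of the selection factor. What your route buys is twofold: it isolates the one assumption that actually carries the theorem (the selection rule must not look at the length --- a length-biased extraction would break the claim, and the paper never says this), and it cleanly separates the exact population-level statement from the finite-sample statement, handling the latter by uniform convergence of the empirical CDF in the KS sense, which is also the viewpoint the paper uses later in Theorem 3. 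What the paper's framing buys is only brevity; as written it conflates the distribution of $X$ with the empirical frequencies in a finite $S'$, a gap you explicitly flag and close. There is no error in your proposal; if anything it is the more defensible version of the theorem, provided the independence hypothesis is added to the statement as you suggest.
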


\begin{proof}
If the random variable $X$ is regarded as the value of the random experiment, then the sample space $S$ corresponds to the total $X$ of the observed value of $\left | X \right |$ experiments. 
Then a part of sample $e$ is randomly extracted from the sample space $S$ to form a new sample space. This process can be seen as $M$ times random test of random variable $X$ to obtain $M$ observation values. Then the statistical results of observation values must be consistent with the distribution law of random variable $X$, that is to say, the distribution law of random variable $X$ in the same local space S is consistent, which is testified by this process. That's all. 

\end{proof}

Define function C, whose input is parameter $e$ and output is mapped to a set, as follows:

$$C(e)=\left \{ {e_i}|i\in [1,m],X({e_i})=X(e),1\leq m\leq M  \right \}$$

\begin{theorem}
If there is a sample space $S^"$ which satisfies the equation $${S^"}= \left \{ e'|e'\in C(e),e\in S'  \right \}$$, then in the sample space $S^{"}$, $X$ still obeys the distribution. 
\label{theorem1}
\end{theorem}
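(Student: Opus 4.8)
The plan is to compute the empirical length distribution of $X$ on $S''$ directly and show it equals the one on $S'$; the first theorem then finishes the job by transitivity, since it already guarantees that $S'$ inherits the distribution of $S$. First I would unpack the construction of $S''$: it is obtained from $S'$ by replacing every domain name $e$ with the set $C(e)$, which consists of $m_e$ elements \emph{all having the same length} $X(e)$, where $m_e \in [1,M]$ is the number of times $e$ is accessed (the \textbf{ADNSS} multiplicity of the third hypothesis). Consequently the number of elements of length $k$ in $S''$ is $\sum_{e \in S',\, X(e)=k} m_e$, while the total size is $|S''| = \sum_{e \in S'} m_e$.

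Second, I would bring in the key modeling assumption behind the third hypothesis: each multiplicity $m_e$ is drawn randomly and on average uniformly over $[1,M]$, \emph{independently of the length} $X(e)$. This independence is precisely what the proof needs, because a replication rule that ignores how long a name is cannot change the relative frequency of any length. Writing $\bar m$ for the common mean of the $m_e$ and $N_k = |\{e \in S' : X(e)=k\}|$, a law-of-large-numbers estimate for $|S'|$ large gives $\sum_{X(e)=k} m_e \approx \bar m\, N_k$ and $|S''| \approx \bar m\,|S'|$.

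Third, I would form the normalized (empirical) distribution, in which the common factor $\bar m$ cancels, so that
\begin{equation}
P_{S''}(X=k) = \frac{\sum_{X(e)=k} m_e}{\sum_{e \in S'} m_e} \approx \frac{\bar m\, N_k}{\bar m\,|S'|} = \frac{N_k}{|S'|} = P_{S'}(X=k).
\end{equation}
Hence $X$ has the same distribution on $S''$ as on $S'$, and combining this with the first theorem yields the claim.

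Finally, the main obstacle is turning the cancellation from an asymptotic heuristic into something rigorous. The independence of the access multiplicity $m_e$ from the length $X(e)$ is an assumption, not a derived fact, and the replacement of $\sum m_e$ by $\bar m$ times a count holds only in expectation. I would handle this either by interpreting \emph{obeys the distribution} in the mean sense and working with the exact identity $E\left[\sum_{X(e)=k} m_e\right] = \bar m\, N_k$, or by invoking the law of large numbers as the sample sizes grow, noting that the fluctuations of $\sum_{X(e)=k} m_e$ about its mean are of lower order and vanish once the sums are normalized. Making that independence explicit, and quantifying the residual error, is where the real work lies.
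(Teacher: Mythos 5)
Your proposal follows essentially the same route as the paper's own proof: the paper likewise introduces the multiplicity as a random variable $Y=|C(e)|$ uniform on $\{1,\dots,M\}$, assumes it independent of the length $X$, computes the expected count of length-$i$ elements ($\frac{M+1}{2}P_i Q$, i.e. your $\bar m\,N_k$) and the expected total $\frac{M+1}{2}Q$, cancels the common factor, and appeals to the law of large numbers to replace frequencies by probabilities. Your write-up is, if anything, more candid than the paper about the fact that the independence of $m_e$ from $X(e)$ is a modeling assumption and that the cancellation is only asymptotic/in expectation.
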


\begin{proof}
Define the random variable $Y$ and $Y=\left | C(e) \right |$, which represents the number of set elements. 
According to the definition of $C$, we can get that $Y$ obeys uniform distribution, that is $Y \sim \mho (1,M)$.
Then, 
\begin{equation}
    P(Y=i)=\frac{1}{M},i=1,2,...,M
\label{e1}
\end{equation}
Let us make an assumption that X obeys $N$ distribution, that is $X\sim N$.
Then, we define these two equation.
\begin{equation}
   P(X=i)=P_i,i=1,2,...,N
   \label{e2}
\end{equation}
\begin{equation}
   \left | S' \right |=Q
   \label{e3}
\end{equation}
By analyzing random experiments, it can be considered that random experiments were carried out by $Q$ times, and the values of each random experiment were handled by function C(e), thus constituting the sample space $S^"$.
Let's define a two-dimensional vector $(X,Y)$. 
From the actual experience, the length of the domain name has nothing to do with the number of visits, so we can think that they are independence. 
Therefore $X$ is not related to $Y$. Combined with equation 1, we can get that,
\begin{equation}
    P(X=i,Y=j)=\frac{1}{M}P_i, i=1,2,...,N,j=1,2,...,M
    \label{e4}
\end{equation}
According to the equation \ref{e2} \ref{e3} \ref{e4}, we can find that,
\begin{equation}
\begin{aligned}
        |S"|&=\sum_{i=1}^{N}\sum_{j=1}^{M}j\cdot P(X=i,Y=j)\cdot Q\\
&=\sum_{i=1}^{N}\sum_{j=1}^{M}j\cdot \frac{1}{M}\cdot P_i\cdot Q\\
&=\sum_{i=1}^{N}\frac{M+1}{2}\cdot P_i\cdot Q\\
&=\frac{M+1}{2}\cdot Q
\label{e5}
\end{aligned}
\end{equation}
In the sample space $S^"$, let $|S^"=i|$ denote the frequency where the domain name length is $i$. Then, we can get that,
\begin{equation}
    \left | S"=i \right |=\sum_{j=1}^{M}\frac{1}{M}\cdot P_i\cdot Q=\frac{M+1}{2}\cdot P_i\cdot Q
\end{equation}
\label{e6}
According to the large number theorem, let $f_A$ denote the occurrences number of event A in $n$ times independent repetitive experiments, and let $p$ denote the occurrence probability of event A in each experiment. Then, for any positive number $\varepsilon > 0$, we can get this,
\begin{equation}
    \lim_{n\rightarrow \infty }P\left \{ \left | \frac{f_A}{n}-p \right |<\varepsilon  \right \}=1
    \label{e7}
\end{equation}
That is to say, when $n$ is large enough, the frequency of events can be used to replace the probability of events. 
Also, according to the equation \ref{e5} \ref{e6} \ref{e7}, in the sample space $S^"$, we can get that,
\begin{equation}
    P\left \{ S"=i \right \}=\frac{\left | S"=i \right |}{\left | S" \right |}=P_i=P\left \{ X=i \right \}
    \label{e8}
\end{equation}
Finally, according to the equation \ref{e2} and \ref{e8}, we can find that the sample space $S^"$ also obeys the $N$ distribution.
\end{proof}

In fact, to analyze burrs is to determine a confidence interval and pick out outliers. 
More generally, we give the following definitions.

Define the sample space $E=\{e\}$.

Define a random variables $X$ where $X(e)\in [1,n]$ and $e\in E$. The real meaning of variable $e$ is that it represents the length of sample $e$.

Define a set $K_1$ where $K_1 = \{X(e)|e\in E\}$. The real meaning of variable $K_1$ is that it represents a collection of different length of $e$.

Define a set $K_2$ where $K_2 \subseteq k_1$ and $|K_2| << |K_1|$.

Define a function $G$ where $$G(e)=\{e'_i|i\in [1,g],X(e'_i)\in K_2\}$$, and $g$ is a unknown parameter.

Define a sample space $E'$ where $E'=E\cup G$ and $E\cap G=\O $.

Further, we have known that in the sample space $E$, $X$ obeys the normal distribution law. Then in the sample space $E^{'}$, according to the distribution law of samples, we should study the distribution of $g$ and $K_2$, which corresponds to the distribution of abnormal points. 

Define a function $F$ where $F(x) = P(X<=x)$, which is a theoretical distribution function of random variables $X$ in sample space $E$ and function $f(x)$ is its probability function. 

The function $$S(x)=\frac{k}{n}$$ is defined as the cumulative frequency of the sample ($K$ is the number of samples whose observation value is less than or equal to the number of $x$ and $n$ is the total number of samples), and $s(x)$ is the frequency function. 

Define difference function 
\begin{equation}
    d(x) = s(x) - f(x)
    \label{th3.0}
\end{equation}

By studying the law of $d(x)$, the approximate solution of $K$ can be obtained.  
Here we give the following theorem.
\begin{theorem}
Assuming that under the condition of significance level $\alpha$ and confidence interval $1 - \alpha$, $X$ obeys the normal distribution law and $D$ is the tolerance constant, the upper bound of $D$ satisfies the following formula:
$$d(x) \leq \frac{d}{1+d}\left ( 1 - f(x) \right )$$
\end{theorem}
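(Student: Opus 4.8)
The plan is to read the tolerance constant $D$ (written $d$ in the displayed inequality) as the ratio of the number of injected burr points to the number of genuine samples, and then to show that the stated bound is nothing but the worst-case distortion of the empirical frequency caused by those burrs. First I would fix notation: let $n$ be the count of genuine samples drawn from $E$ and let $B$ be the total number of abnormal points contributed by $G$, so the enlarged space $E'$ contains $n+B$ samples. Setting $d=B/n$ makes the fraction of burrs inside $E'$ equal to $\frac{B}{n+B}=\frac{d}{1+d}$. Recognizing that the factor $\frac{d}{1+d}$ in the target inequality is exactly this burr fraction is the key structural observation that drives the entire argument.

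Next I would decompose the empirical frequency $s(x)$ in $E'$, at a fixed length $x$, into a genuine part and a burr part. Because $X$ obeys the normal law on $E$, Theorem~\ref{theorem1} together with the law of large numbers (equation~\ref{e7}) guarantees that, at confidence level $1-\alpha$, the number of genuine samples of length $x$ equals $n\,f(x)$ to within the confidence interval, so I may replace the genuine contribution by $n\,f(x)$. This is precisely where the significance level $\alpha$ enters the proof: it certifies that the only non-negligible source of deviation from the theoretical curve is the burr injection, the genuine frequencies having already converged to $f(x)$.

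I would then bound the burr part by its extremal configuration. Since the burrs are supported on the small set $K_2$, in the worst case all $B$ of them pile up at a single length $x$, which gives
\begin{equation}
    s(x) \leq \frac{n\,f(x)+B}{n+B}.
    \label{th3.worst}
\end{equation}
Subtracting $f(x)$ collapses the $n\,f(x)$ terms on the right-hand side and leaves $\frac{B\,(1-f(x))}{n+B}$, which on substituting $d=B/n$ is exactly $\frac{d}{1+d}\left(1-f(x)\right)$, yielding the claimed bound $d(x)\leq \frac{d}{1+d}\left(1-f(x)\right)$.

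The main obstacle I expect is not the algebra, which is routine once \eqref{th3.worst} is in place, but pinning down the informal objects in the statement: clarifying that $D$ and the displayed $d$ denote the same burr-to-sample ratio, and rigorously justifying the replacement of the genuine empirical frequency by $f(x)$ under the finite-sample confidence level $1-\alpha$ rather than in the exact $n\to\infty$ limit. Handling that confidence interval carefully—so that the ``$=$'' used for the genuine part is really a ``$\leq$'' valid with probability $1-\alpha$—is the delicate step, and I would state explicitly that \eqref{th3.worst} is the all-burrs-at-one-length extremal configuration, so that the bound holds uniformly over every admissible placement of the abnormal points.
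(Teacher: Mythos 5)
Your proposal is correct and follows essentially the same route as the paper: both arguments hinge on the extremal bound $s(x) \leq \frac{n\,f(x)+g}{n+g}$ obtained by concentrating all abnormal points at a single length (the paper's assumption $|K_2|=1$), followed by the same subtraction of $f(x)$ to produce $\frac{d}{1+d}\left(1-f(x)\right)$. The only cosmetic difference is the provenance of $d$: the paper reads it off the K-S single-sample critical-value table at significance level $\alpha$ and then sets $g=d\cdot n$, whereas you define $d=B/n$ as the burr-to-sample ratio and let $\alpha$ justify replacing the genuine empirical frequency by $n\,f(x)$; the resulting algebra is identical.
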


\begin{proof}
According to K-S test, we define a statistics $D$ where 
\begin{equation}
   D = \underset{x}{max}\left | F(x) - S(x) \right |
   \label{th3.1}
\end{equation}
According to the critical value table of D tested by K-S single sample with the condition of significance level $\alpha$ and confidence interval $1-\alpha$, the value $d$ of $D$ can be determined as tolerance. 
In the sample space $E^{'}$, assuming $|K_2|=1$ satisfied, we can get $g=d\cdot n$, which is the upper bound of $g$. The actual meaning of $g$ is the number of outliers at a burr point. 
If $X$ still obeys the law of normal distribution in sample space $E^{'}$, it must be satisfied with the formula:
\begin{equation}
    s\left ( x \right )\leq \frac{n\cdot f\left ( x \right )+g}{n+g}=\frac{f\left ( x \right )+d}{1+d},
    \label{th3.2}
\end{equation}
supposing that $s(x)$ is always greater than $f(x)$. 

Combined with the formula \ref{th3.2} and \ref{th3.0}, we can get that result,
$$d(x)\leq \frac{f\left ( x \right )+d}{1+d}-f(x)=\frac{d}{1+d}\cdot \left ( 1-f(x) \right )$$
This is evident. 
\end{proof}


\subsection{Hypothesis Validation}

Based on the above theorems, we analyze 5 hypotheses in Section III.C. 

According to the fitting analysis of the top 1 million domain names from Alexa website, as shown in Figure \ref{fig:top_1m},
we find that they fit well on the Normal Distribution curve. 
A reasonable explanation is that people always tend to use domain names of a certain length range. Short and long domain names are both rare, which will make a high in the middle of the curve and a low at both ends. 
Therefore, the length and frequency of DNS show the characteristics similar to Normal Distribution, that is, the middle of the curve is high and the two ends are low. 
Although we haven't found any other proof of this conclusion, based on this publicly trusted DNS database from Alexa website, we consider that the hypothesis 1 is valid. 

According to Theorem 1, if the sample space of all DNS domain names is regarded as a whole, then the sample space of a DNS domain name accessed by an internal network is regarded as a subset of the whole.
It further gets that the distribution of their domain name length values is consistent, which can verify the second hypothesis. 

According to Theorem 2, 
the average visiting number of each domain name is roughly regarded as 1 to M times at random in the sample space of DNS, which exclude famous websites and preferred websites. It gets that the distribution of domain name length value of DNS traffic access space is consistent with the whole, which can verify hypothesis 3. 

Due to the preference of network users, there are always some sites visited more than the average range. According to Theorem 3, a reasonable range can be determined, which is used to screen out the burrs and abnormal flow. This will validate hypothesis 4. 

At this point, the first four hypotheses can be verified. On this basis, according to the fifth hypothesis, the DNS tunnel traffic can be screened out. 

Based on the above analysis, we propose a theoretical detection model to filter DNS traffic. 

\subsection{Expansion Analysis}

As discussed above, this primary and basic assumption is that the DNS sample space follows a normal distribution. If these data show other distribution characteristics or obey other distributions, which can be similar to normal distribution over specified ranges of their parameters, is our method still valid?

We think our approach is still feasible.

We don't think it is very important whether these data obey the normal distribution. In our paper, we believe that it obeys the normal distribution because we use real DNS data in the real world, and then find that it fits well with the normal distribution. If it obeys other distribution in some specific scenarios, this does not affect the use of our method. In this case, it is still possible to use this set of mathematical theories we have given, that is, to establish a reasonable range under this distribution and screen out abnormal points that exceed this interval.
The data contained in these abnormal points is the information that deserves our attention, and thus it can achieve our filter purpose.

\section{Theoretical Detection Model}
According to the above analysis, in order to study DNS traffic, it is necessary to find out the distribution of DNS traffic in the large enterprise network and then screen out abnormal traffic. The whole process is divided into four parts: data fitting, Delineation range, data filtering, and multi-dimensional verification, as shown in the Figure \ref{fig:modeling}.
In the figure \ref{fig:modeling}, Data fitting, Delineation range, and Data filtering are the steps we performed according to the Methodology Section. The fourth step is to use a common method, Multidimensional verification, to determine whether it is a tunnel domain name.

\begin{figure*}[h]
    \centering
    \includegraphics[width=1\textwidth]{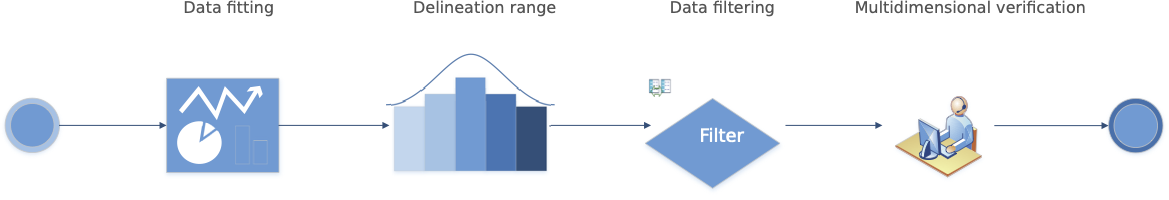}
    \caption{Flow chart of theoretical detection model} 
    \textbf{The flow chart consists of four steps: data filtering, delineation range, data filtering and multidimensional verification.}
    \label{fig:modeling}
\end{figure*}

\subsection{Data Fitting}
we extract all the domain names appearing in the internal network to form a sample space \textbf{NDSS}.
 In \textbf{NDSS}, we represent the length of the domain name as the x-axis and represent the occurrence number of the length of the domain name as the y-axis.
Then, the normal distribution curve s1 of the domain name is constructed through data fitting. 
Furthermore, we handle these DNS to form a sample space \textbf{ANDSS}. In \textbf{ANDSS}, the normal distribution curve s2 of accessing domain names is constructed through data fitting. 

\subsection{Delineation Range }
Due to the influence of network characteristics, DNS traffic in different enterprise networks will be different, which is mainly influenced by some special or user' preference websites.
Therefore, by Theorem 3, we count out the reasonable range of DNS fitting curve. 

\subsection{Data Filtering}
On the curve of sample space \textbf{NDSS} and \textbf{ANDSS}, 
we screen out burr points which are beyond the delimitation range. By these burr points, we can get their related domain name length, which is used to obtain this length DNS traffic.
The white list is used to filter the normal domain names in this DNS traffic, and the rest is suspicious. 

\subsection{Multidimensional Verification}
We take a variety of ways to verify the final result.
By using DNS sandbox technology, suspicious DNS traffic messages are reconstructed and replayed in the sandbox. then we can judge it by observing the response results of DNS server.
By using DNS payload analysis technology, we can further analyze the elements of the DNS package. DNS tunnel usually leads to confusion of query characters, which can be used as a judgment.
By using DNS traffic analysis technology, we can analyze the data interaction in flow. DNS tunnel usually makes massive and sudden traffic, which can be used as a judgment.

\subsection{Extention analysis}
With this model, we can construct the distribution curve of the current data. By filtering the outliers, we can find the length of the domain name that was accessed abnormally. At this domain name length, through further comprehensive verification, we can pick out abnormal domain name information, that is, the suspected DNS tunnel.

\section{Sliding Window Differential Detection Framework}
According to the theoretical detection model proposed in the previous section, we construct a window-based detection framework. By filtering the burrs of each time window, the change of burrs is further compared, and the domain name difference set of the sudden burrs is screened out for judgment. The overall framework is shown in the Figure \ref{fig:framework}.
As can be seen from the figure \ref{fig:framework}, the sliding window method in our Sliding Window Differential Detection Framework is mainly used to cut files, and the differential method is mainly used to filter sudden domain names from adjacent result sets.

\begin{figure*}[h]
    \centering
    \includegraphics[width=1\textwidth]{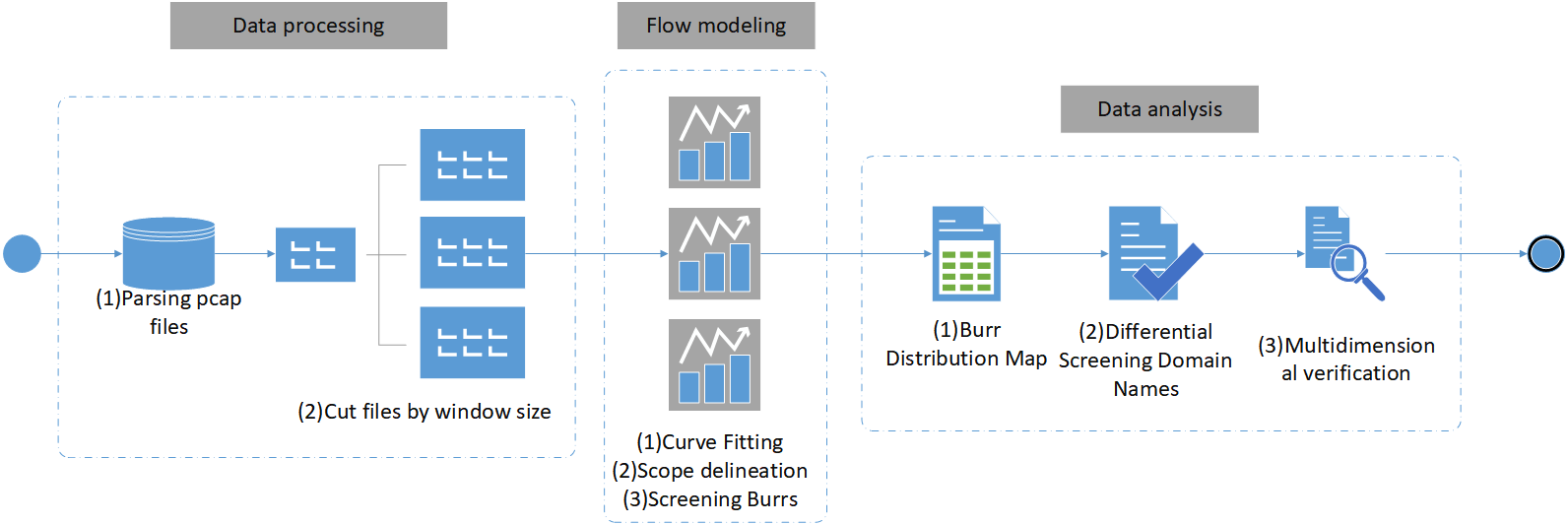}
    \caption{Sliding Window Differential Detection Framework Overview} 
    \textbf{The diagram is composed of 3 steps: data processing, flow modeling and data analysis.}
    \label{fig:framework}
\end{figure*}

\subsection{Data Processing}
In this stage, we first parse the pcap file of traffic and generate CSV file. Then, according to the time window, we cut the file to form the domain name and frequency table with solid time window.

\subsection{Flow Modeling}
In this phase, we use the flow modeling in the previous section to filter the data. After that, the flow in the burr parts is screened out and the burr distribution map is formed. 

\subsection{Data Analysis}
According to the burr distribution map, we select the sudden burr in two adjacent window files, and get the domain names of the burr location in the two window files, then calculate their difference sets. 
Their difference sets reflect the sudden change of domain names in burr areas. Further, we verify the domain names in the difference sets by multidimensional verification and give the final decision results. 

\subsection{Extention Analysis}
The difference between this sliding detection framework and the detection model is that we add sliding and differential methods. In fact, through sliding and differential methods, we can kick out the often reverse queries and hot website visits, and retain the domain name that suddenly appears (such as high-frequency query domain names generated when a DNS tunnel is established), thereby reducing the scope of suspicious domain names. Finally, we use a comprehensive detection method to further exclude domain names that are not DNS tunnels.

\section{Experiment and Evaluation}
\subsection{Experiment}

\subsubsection{Model Testing}

Firstly, we select one month DNS traffic data of an organization in the actual network environment, totaling 645047 DNS query messages. 

According to Theorem \ref{theorem1}, we first extract DNS traffic of \textbf{DNSS}. Its capacity is 48918 domain names. The statistical results of the length and frequency of domain names are shown in figure \ref{fig:fig1}. 

\begin{figure}[h]
    \centering
    \includegraphics[width=.4\textwidth]{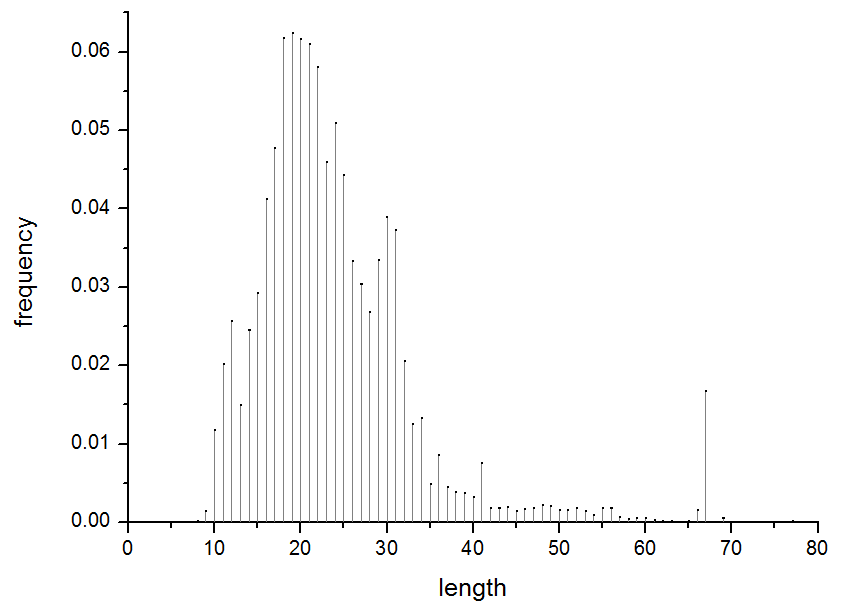}
    \caption{DNSS domain name length and frequency bar chart} 
    \textbf{The figure is formed by extracting the domain name length and frequency of the DNSS.}
    \label{fig:fig1}
\end{figure}

Then, we use the normal distribution to fit it, and the result is shown in figure \ref{fig:fig2}. 

\begin{figure}[h]
    \centering
    \includegraphics[width=.4\textwidth]{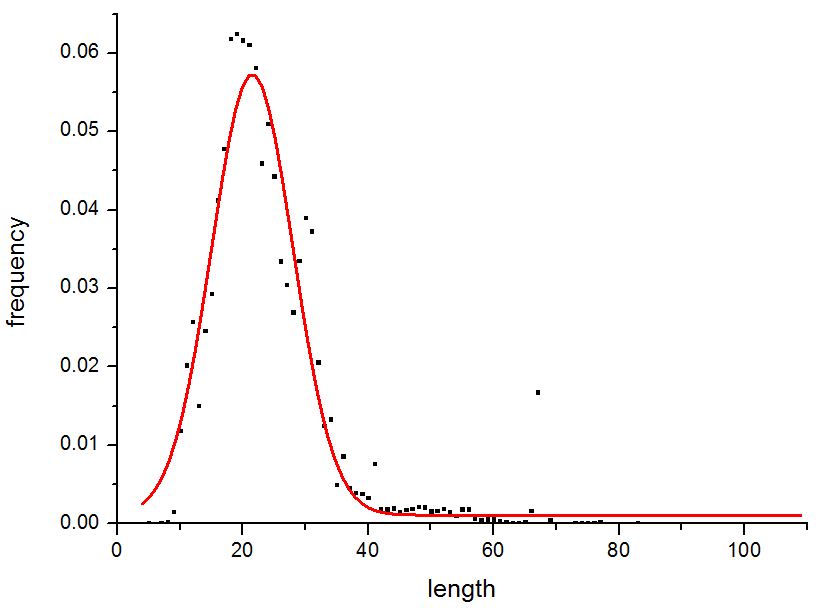}
    \caption{Curve fitting map} 
    \textbf{The Figure is generated by Origin software.}
    \label{fig:fig2}
\end{figure}

In Figure \ref{fig:fig2}, the red solid line is the normal distribution curve generated by data fitting. From this graph, we can find more obvious burr points, which belong to abnormal phenomena. 
According to Theorem 3, the range of $D$ is obtained, as shown in Figure \ref{fig:fig3}. 

\begin{figure}[h]
    \centering
    \includegraphics[width=.4\textwidth]{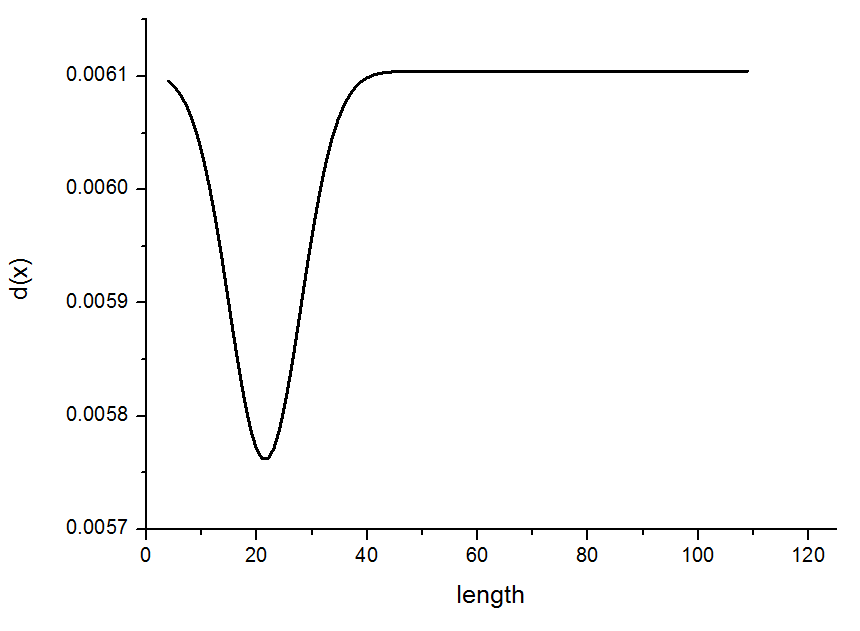}
    \caption{Function curve of d(x)} 
    \textbf{The Figure reflects the change of the function d(x).}
    \label{fig:fig3}
\end{figure}

In Figure \ref{fig:fig3}, we can see that the value of $d(x)$ is affected by $f(x)$. The larger is the value of $f(x)$, the smaller is the value of $d(x)$, and vice versa. It just reflects that the threshold interval is affected by frequency. When the frequency is large, it represents that multiple domain names are of this length. Only when the threshold interval becomes smaller, can it be sensitive to anomalies. When the frequency is small, a small number of domain names are affected by this length, and then the threshold interval will become larger accordingly. 
Further, the delineation range of 95\% confidence interval is obtained, as shown in figure \ref{fig:fig4}. 

\begin{figure}[h]
    \centering
    \includegraphics[width=.4\textwidth]{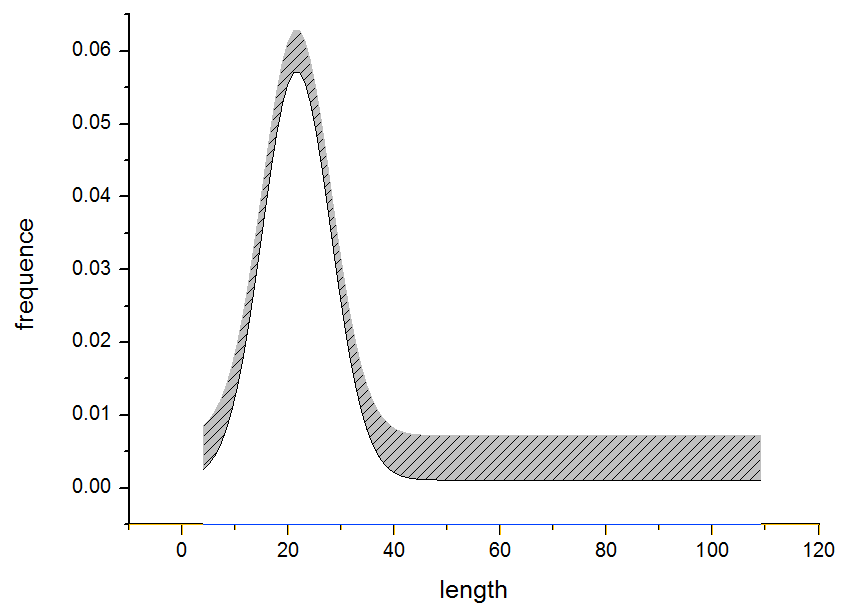}
    \caption{Delineation map for DNSS} 
    \textbf{The shaded portion of the graph reflects the acceptable range of intervals.}
    \label{fig:fig4}
\end{figure}

By comparing with the actual distribution data, we can select the more obvious burr points, that is, the outliers beyond the reasonable range, as shown in figure \ref{fig:fig5}.

\begin{figure}[h]
    \centering
    \includegraphics[width=.4\textwidth]{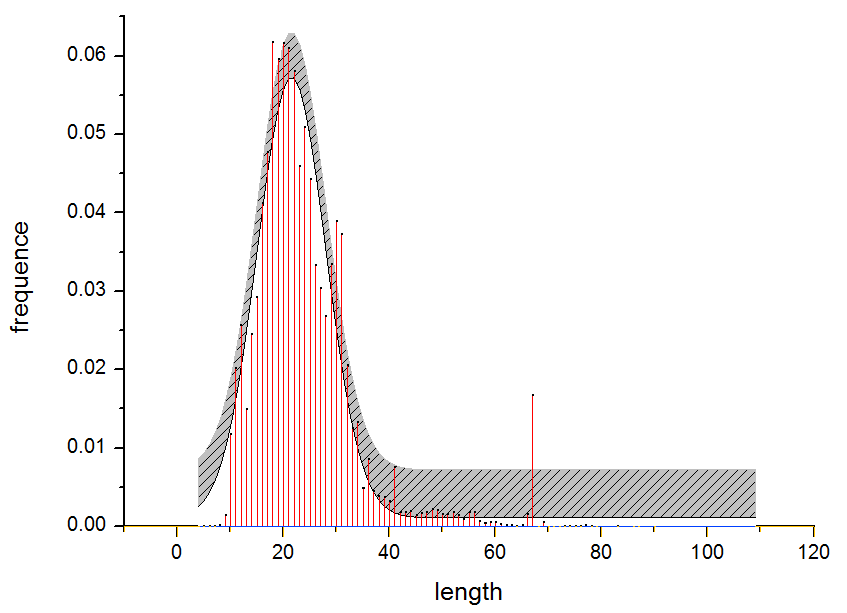}
    \caption{Burr image} 
    \textbf{The red line outside the gray area in the figure is the burr. }
    \label{fig:fig5}
\end{figure}

According to Theorem 2, we extract DNS traffic of \textbf{ADNSS}. Its capacity is 645047 domain names. The statistical results of the length and frequency of domain names are shown in Figure \ref{fig:fig6}. 

\begin{figure}[h]
    \centering
    \includegraphics[width=.4\textwidth]{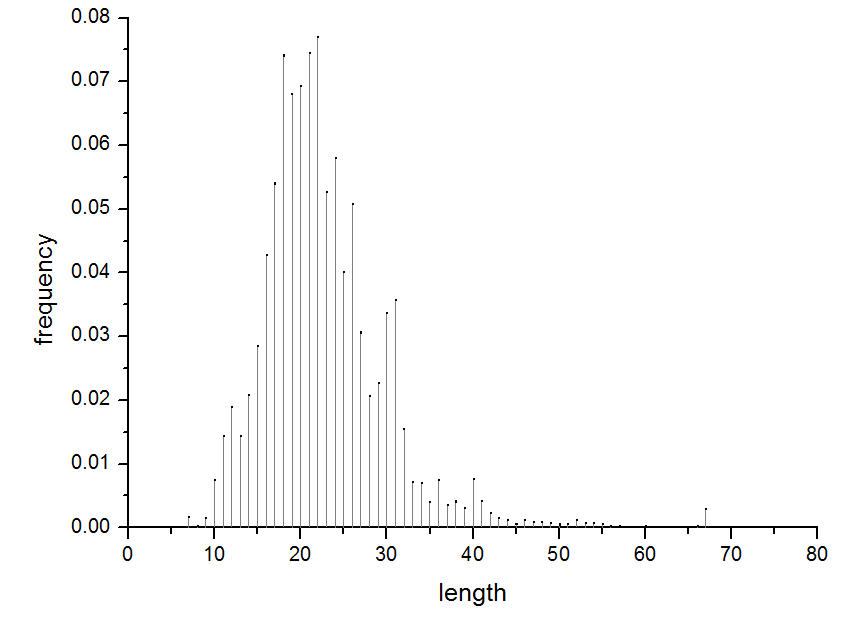}
    \caption{ADNSS domain name length and frequency bar chart} 
    \textbf{The figure is formed by extracting the domain name length and frequency of the ADNSS.}
    \label{fig:fig6}
\end{figure}

Then, the normal distribution is used for fitting, as shown in Figure \ref{fig:fig7}. 

\begin{figure}[h]
    \centering
    \includegraphics[width=.4\textwidth]{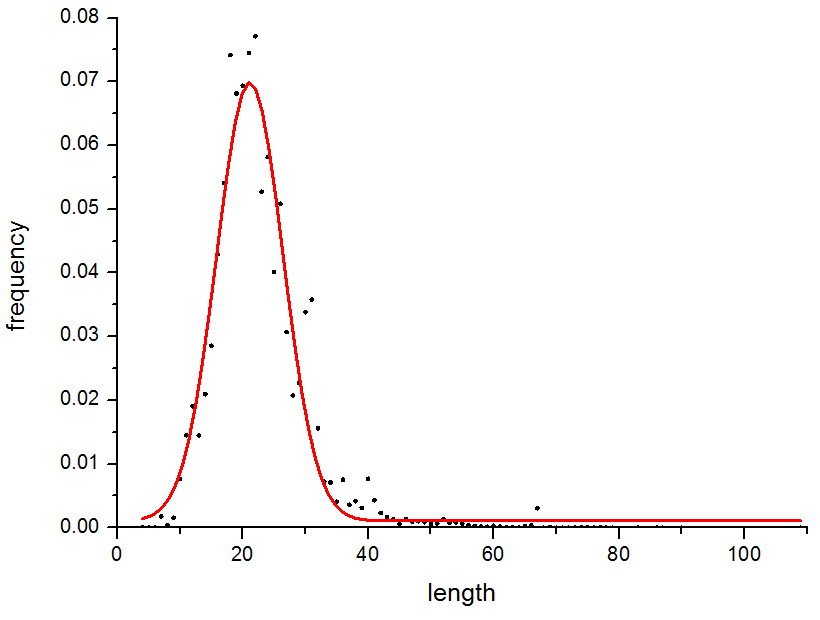}
    \caption{Curve fitting map for ADNSS} 
    \textbf{The Figure is generated by Origin software.}
    \label{fig:fig7}
\end{figure}

The $d(x)$ change of $ADNSS$ is shown in Figure \ref{fig:fig8}. 

\begin{figure}[h]
    \centering
    \includegraphics[width=.4\textwidth]{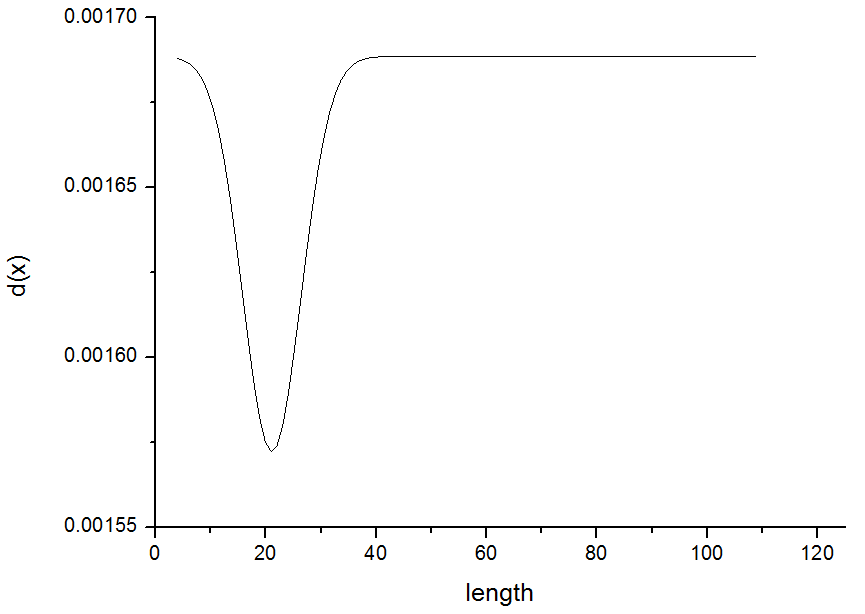}
    \caption{Function curve of d(x) for ADNSS} 
    \textbf{The Figure reflects the change of the function d(x).}
    \label{fig:fig8}
\end{figure}

The delineation range for \textbf{ADNSS} is shown in Figure \ref{fig:fig9}. 

\begin{figure}[h]
    \centering
    \includegraphics[width=.4\textwidth]{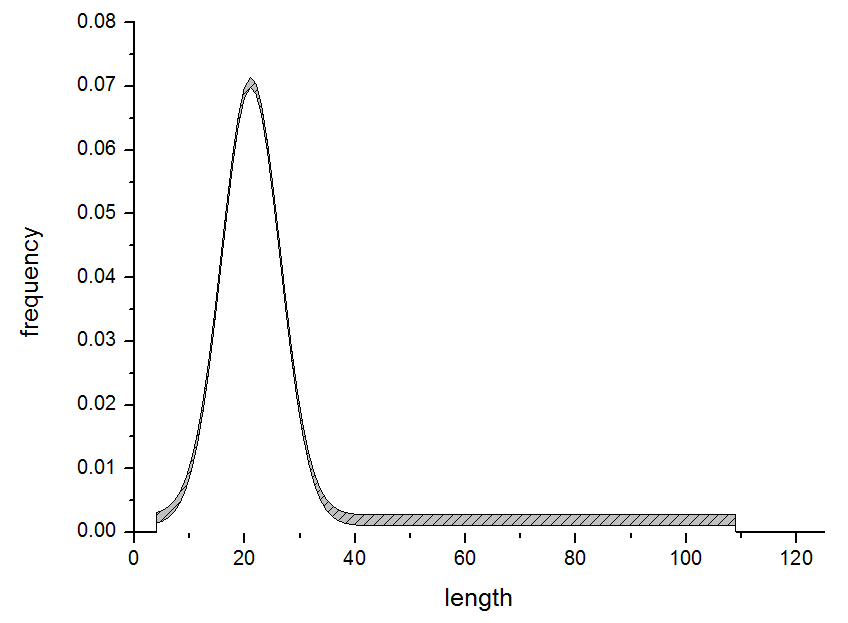}
    \caption{Delineation map for ADNSS} 
    \textbf{The shaded portion of the graph reflects the acceptable range of intervals.}
    \label{fig:fig9}
\end{figure}

The burr distribution for \textbf{ADNSS} is shown in Figure \ref{fig:fig10}. 

\begin{figure}[h]
    \centering
    \includegraphics[width=.4\textwidth]{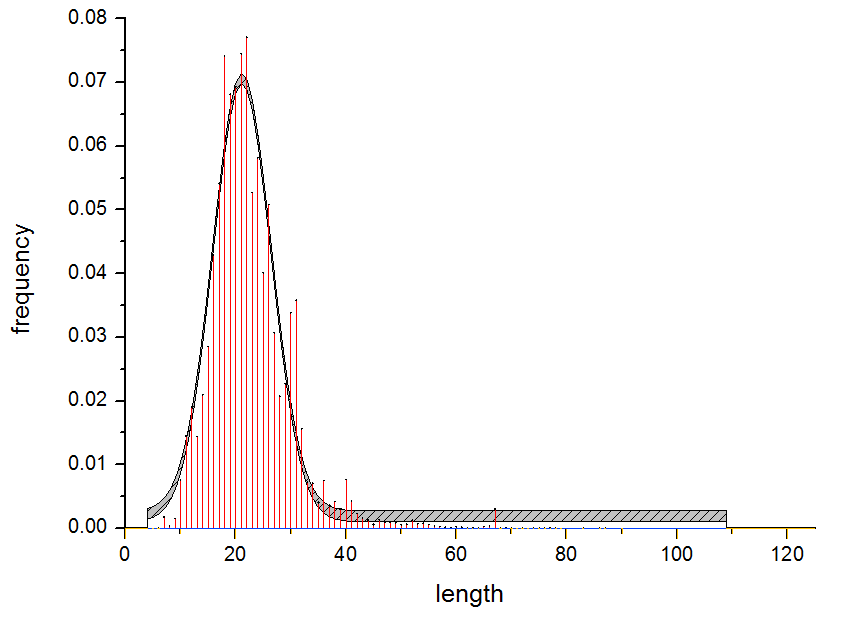}
    \caption{Burr image for ADNSS} 
    \textbf{The red line outside the gray area in the figure is the burr.}
    \label{fig:fig10}
\end{figure}

In combination with Fig \ref{fig:fig5} and Fig \ref{fig:fig10}, we analyze the burr points and find that the burr points are mainly distributed in the location of 12, 18, 21, 22, 30, 31, 40 and 67 length domain name. We further analyze the domain names of burr locations. We find that the number of domain names types in each burr location is not very large, but the number of domain names visiting in each burr location have been accessed a lot, which results in abnormal points. In other words, these domain names have been repeatedly visited many times.We ignore the number of these domain names visiting and only analyze the number of domain names types. And the results are shown in Figure \ref{fig:table1}. 

\begin{figure}[htbp]
    \centering
    \includegraphics[width=.5\textwidth]{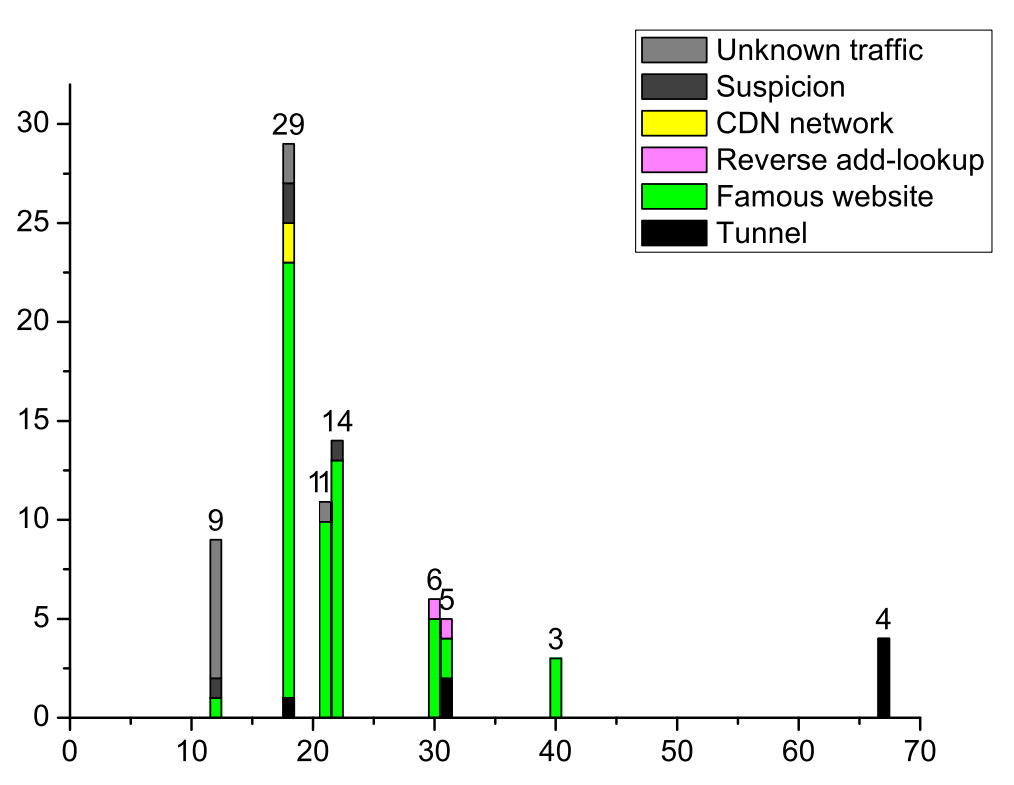}
    \caption{Domain Name Analysis Results for Burr Location} 
    \textbf{The x axis represents the position of the burr, and the y axis represents the number.}
    \label{fig:table1}
\end{figure}

As can be seen from the Figure \ref{fig:table1}, four DNS tunnels are found at Burr 67.\textbf{(Note: Burr 67 in this article refers to abnormal point which is located in domain name of 67 length.)}
At Burr 40, there are a lot of visits to 3 well-known websites. 
At Burr 31, 2 well-known websites and one type of reverse address query are visited extensively, and 2 DNS tunnels are found. 
At Burr 30, there are 5 well-known websites visited extensively, and a large number of reverse address queries. 
At Burr 22, 13 well-known websites are frequently visited, and one malicious web site is frequently visited, presumably due to malicious applications. 
At Burr 21, there are 10 well-known websites and 1 unknown domain names.
At Burr 18, there are 16 legitimate domain names frequently visited, which is presumed to the preference of users.
Else, there are still 6 well-known websites, 2 CDN distribution networks, 2 suspicious domain names, and 1 DNS tunnel. 
At Burr 12, there is 1 well-known website, 7 unknown domain names and 1 malware application. 

In summary, the main reasons for the burr points are well-known websites visiting, reverse address querying, user preference visiting, malware visiting, and DNS tunneling. 

\subsubsection{Window Framework Testing}
When we cut files according to the time window, we do experiments and find that when the time window is less than 7 days, it is difficult to fit, as shown in the figure \ref{fig:fitting1}. 

\begin{figure}[H]
    \centering
    \includegraphics[width=.4\textwidth]{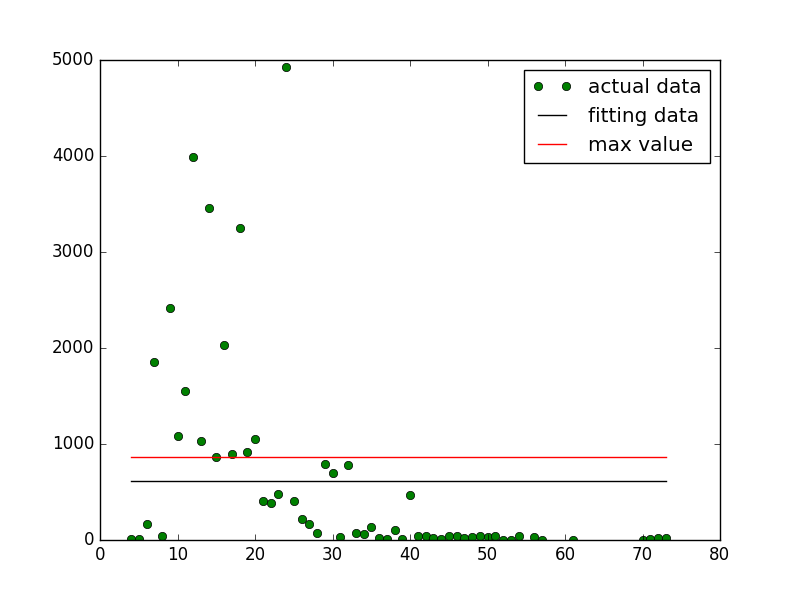}
    \caption{fitting map for less than 7 days datas} 
    \label{fig:fitting1}
\end{figure}

The time window of more than 2 weeks is easier to fit.
The reason for this phenomenon is that our model is based on statistical laws, when the amount of data is relatively small, the statistical laws are not obvious. Only when the amount of data is large, can the statistical law be displayed. 

Here we choose one month as the window size for cutting and fitting, the effect is very good, as shown in the figure \ref{fig:fitting2}. 
\begin{figure}[H]
    \centering
    \includegraphics[width=.4\textwidth]{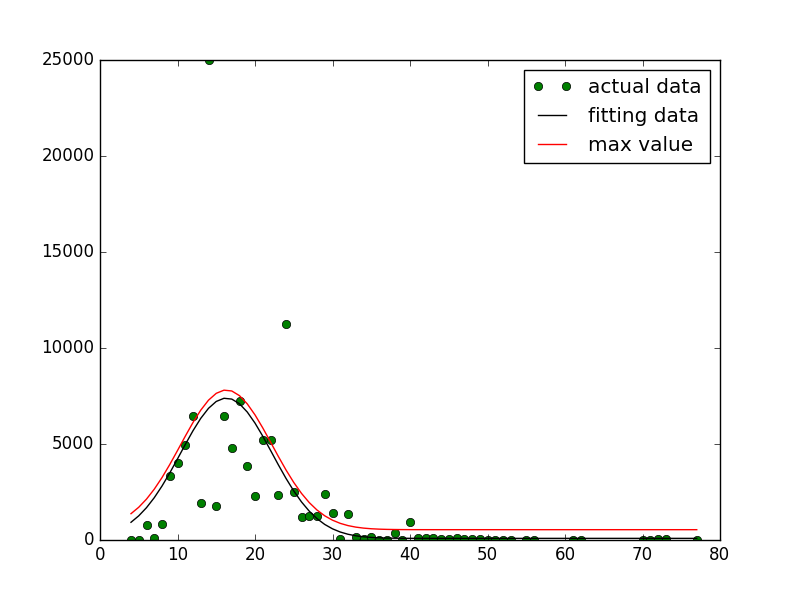}
    \caption{fitting map for one month data} 
    \label{fig:fitting2}
\end{figure}

The generated burr distribution map is shown in the Table 1.

\begin{figure}[htbp]
    \centering
    \textbf{TABLE 1: Burr Distribution Heat Map}
    \includegraphics[width=.4\textwidth]{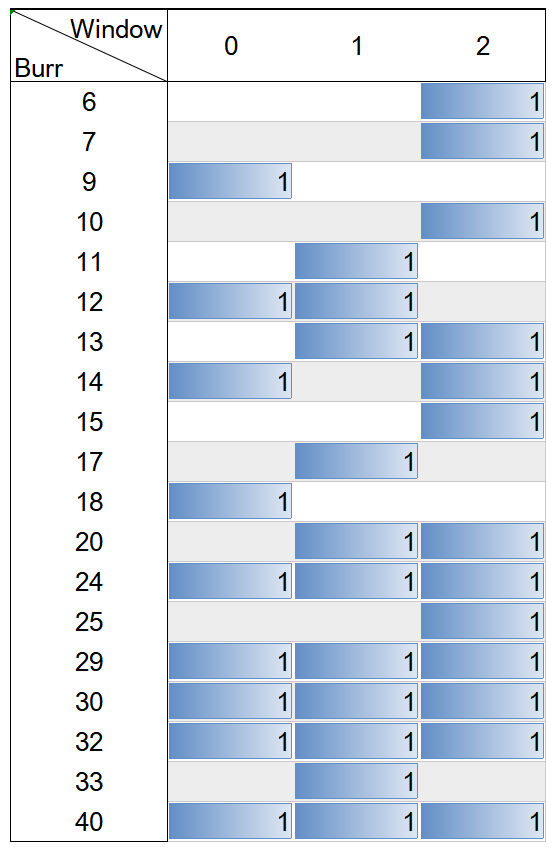}
    \label{fig:burr_distribution}
\end{figure}

In this table 1, the 'Window' of the horizontal represents the window traffic, where 0 represents the traffic of the first month, 1 represents the traffic of the second month, and 2 represents the traffic of the third month. The 'Burr' on the coordinate represents the different positions of the burr.

We select the burr that appears suddenly and extract the domain name of this part. Then we do the difference set to find the domain name that just appear. 
The final results are shown in the Figure \ref{fig:window_sudden_result}. 
\begin{figure}[h]
    \centering
    \includegraphics[width=.5\textwidth]{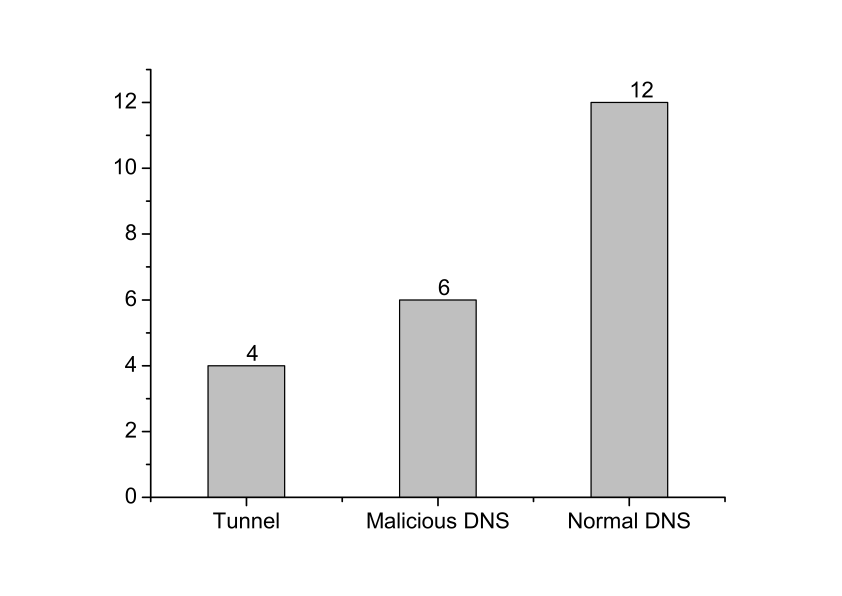}
    \caption{The result of the analysis for the traffic of the sudden burr position} 
    \label{fig:window_sudden_result}
\end{figure}

As can be seen from Figure \ref{fig:window_sudden_result}, 
it greatly improves the screening proportion of suspicious DNS by the window detection framework.
The reason is that normal website access is always continuous,
so that it seldom makes a sudden and massive DNS flow.
But malicious domain name access is often accompanied by a surge in traffic, which is easy to be screened out.

\section{Evaluation}
Our approach is designed to address the issue of how to detect DNS tunnels under non-labeled data conditions. But to verify the effectiveness of our approach, we also conduct a comparative test on the public dataset. We compare it with other methods, both in public data sets and private data sets.
\subsection{Comparison on public datasets}

\subsubsection{data set}

The CTU-13 \cite{CTU} is a public dataset that is captured by the CTU University, Czech Republic. 
We choose the normal traffic in their data set and the tunnel traffic generated by ourselves to build a data set.
Among them, we construct a tunnel environment by the public software Iodine \cite{iodine}, in which we simulate the operations of ping, browser access, and uploading files.

To further differentiate, the tunnel traffic in the training set is generated by employing the 'test.maxsen.tk' domain name. The tunnel traffic in the test set is generated by employing the 'b.tunnel.com' domain name. The tunnel environment for the training set is built on the Internet via VPS. The tunnel environment for the test set is built in a virtual machine environment.

The training set and test set are shown in Figure \ref{fig:public_dataset}.
\begin{figure}[H]
    \centering
    \includegraphics[width=.5\textwidth]{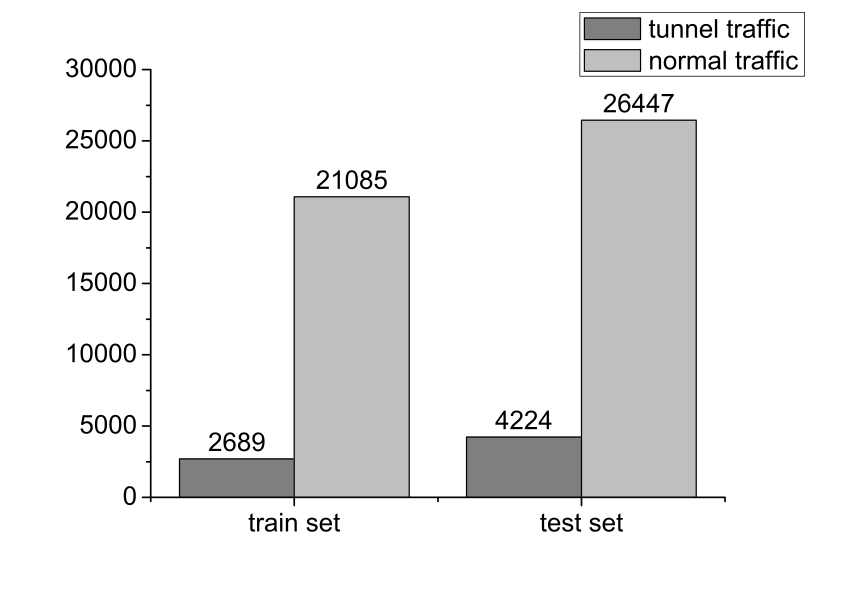}
    \caption{public dataset} 
    \label{fig:public_dataset}
\end{figure}

\subsubsection{Comparison}
We chose the SVM algorithm in the paper \cite{almusawi2018dns} and the neural network algorithm in the paper \cite{homem2017harnessing} as comparison algorithms.
At the same time, we also select IDS \cite{IDS} developed by the Colasoft Co.Ltd as a comparison, which is a rule-based DNS tunnel detection system.

As summarized in the paper of \cite{Yassine2018}, the evaluation indicators used in evaluating detector are different. But most of them choose Accuracy, Precision, Recall, and F1-score as evaluation indicators. We also choose them as our evaluation indicators.
Define TP for true positive, which is original normal traffic and is determined to be normal traffic.
Define FN for false negative, which is original normal traffic and is determined to be tunnel traffic.
Define FP for false positive, which is original tunnel traffic and is determined to be normal traffic.
Define TN for true negative, which is original tunnel traffic and is determined to be tunnel traffic.
Then, 
$$accuracy = \frac{TP + TN}{TP + FN + FP + TN}$$
$$precision = \frac{TP}{TP + FP}$$
$$recall = \frac{TP}{TP + FN}$$
$$F1-score = \frac{2}{{precision}^{-1} + {recall}^{-1}}$$

For SVM and neural network algorithms, we first train them on the training set and then test them on the test set.
Since our method and IDS developed by the Colasoft Co.Ltd do not require a training set, we only test them on the test set.
The result of the burr from our method is shown below in Table 3.

\begin{figure*}[htp]
    \centering
    \textbf{TABLE 3: The domain names of the burr location screened in the test set}
    \includegraphics[width=1\textwidth]{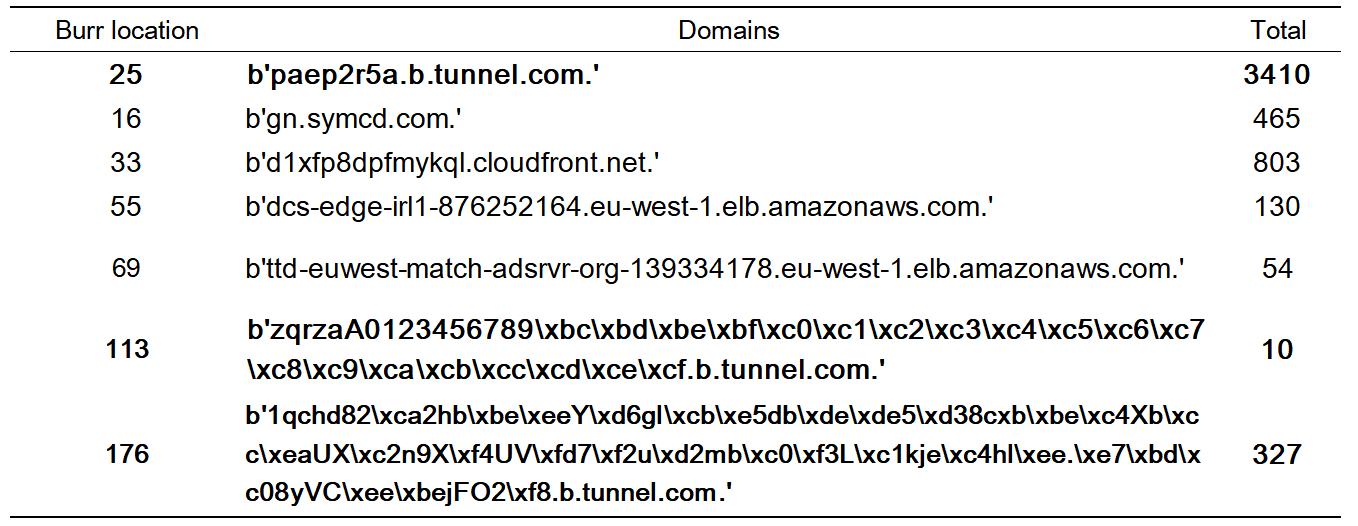}
\end{figure*}
In Table 3, the bold domain name containing the subdomain 'b.tunnel.com' is the tunnel. It can be seen that our method screened out the tunnel flow of 0.97. Moreover, we successfully find the suspicious subdomain 'b.tunnel.com', which has a strong practical significance.

The comparison between our method and the other three methods is shown in Table 4 below.






\begin{figure*}[htp]
    \centering
    \title{}
    \textbf{TABLE 4: Comparison between SVM, Nerual Network, IDS and our method}
    \includegraphics[width=.8\textwidth]{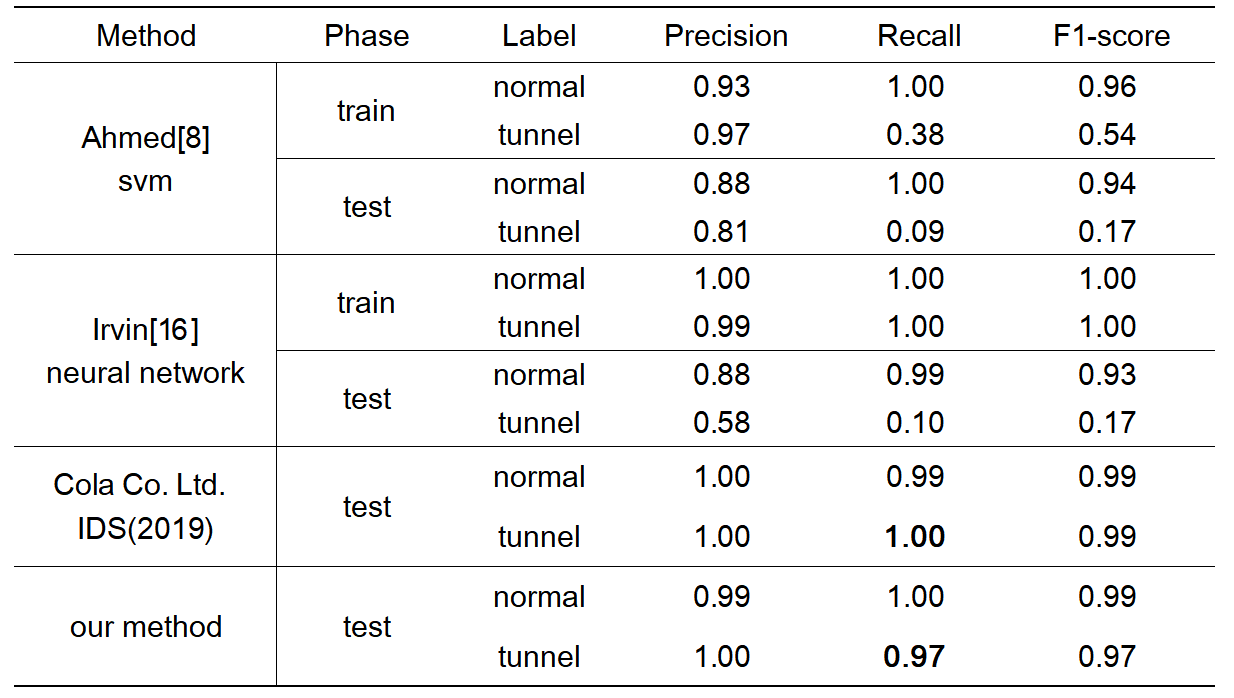}
    
\end{figure*}



In the Table 4, these results are retained two decimal places and we use rounding method. From these results, we can find that for the SVM algorithm on train dataset, its accuracy, precision, recall, and F1-score reach 0.93, 0.22, 0.22, and 0.55, respectively.
For the neural network algorithm on train dataset, its accuracy, precision, recall, and F1-score reach 1.00, 0.99, 1.00, and 1.00, respectively. Their results are very good.

However, the indicators in the test set have declined sharply, especially the recall rate is very low. Among them, the recall rate of the SVM algorithm is 0.09, and the recall rate of the neural network algorithm is 0.10.

In fact, in the original papers of \cite{alshaikhdeeb2015integrating} and \cite{homem2017harnessing}, their results are very good. One possible reason is that 'test.maxsen.tk' is used as tunnel domain name in the training dataset, but 'b.tunnel.com' is used as tunnel domain name in the test dataset.
Their algorithm may only detect the fixed tunnel domain name. But in real traffic, it's impossible for us to know its DNS tunnel form. Once the tunnel domain name changes, it will be difficult to behave well.
It also proves that their algorithms are valid for a particular data set. But once the data changes slightly, the performance of the model drops quickly.

Our method and IDS maintain high classification accuracy and recall. They both have high detection efficiency for known tunnel tools.
Further comparison, IDS reaches 1.00 for the recall rate, which is higher than our 0.97.
Further analysis shows that the reason why the rule-based IDS can achieve a high detection rate is that it has made the rules of tunnel traffic generated by tools such as open source software Iodine.

However, a lot of tunnel traffic in the real environment is not always discovered and made into the rules. And in fact, there are still many unknown tunnels that threaten network security.
So, we are more concerned with the discovery of unknown threats. Therefore, we will validate our method on the private data captured in the actual environment.

\subsection{Comparison on private datasets}

\subsubsection{Data set}
The private data set is ten days of network traffic captured At some internet gateway in a real environment. The total amount of DNS data in the private data is 538064.  
Obviously, these data are unknown and have no tags.
\subsubsection{Comparison}
Because we are analyzing a large number of unknown data, the method of machine learning training classifier is not applicable here. 
Also, we have compared these machine learning methods on the public dataset and analyzed their performance.
Therefore, we choose the IDS(2019 version) developed by the Colasoft Co.Ltd as a comparison to evaluate. 
The comparison results are shown in Figure \ref{fig:compare_our_ids}. 

\begin{figure}[htb]
    \centering
    \includegraphics[width=.5\textwidth]{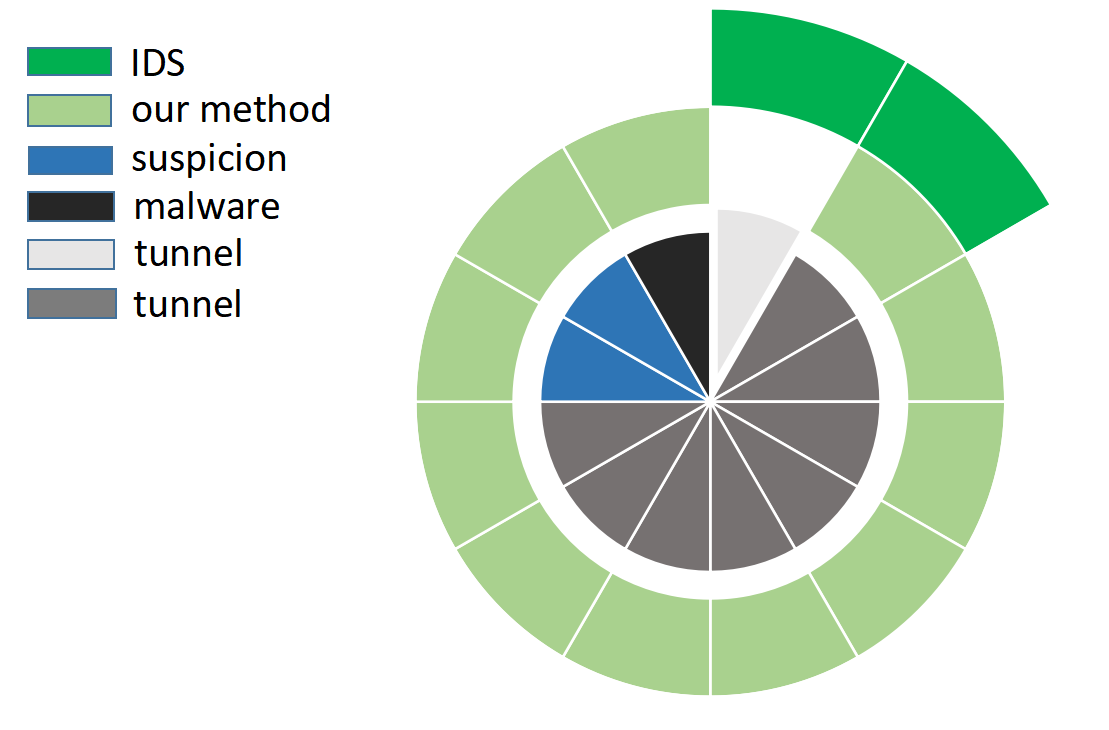}
    \caption{Comparison of our method and IDS on private datasets} 
    \textbf{The picture is composed of three parts: the inner circle, the middle ring, and the outer ring. The inner circle represents all detected traffic. The middle ring represents the traffic detected by our method. The outer ring represents the traffic detected by the IDS.}
    \label{fig:compare_our_ids}
\end{figure}

As can be seen from Figure \ref{fig:compare_our_ids}, 2 DNS tunnels were detected by the IDS. Our method detects 8 DNS tunnels, 2 suspicious DNS domain names and 1 malicious software application domain name. 
It shows that our method has better detection results than IDS.

We further analyze the undetected DNS tunnel. The length of the domain name is 69, and the total number of domain names is 80. 
If the length of the tunnel domain name is not beyond the delimitation range, which can be considered that it is not beyond the tolerance range, there is no burr in our model. 
The tolerance range reflects the acceptable number of domain names. When the number of DNS tunnel messages is small, its bandwidth is very low and data transmission capacity is limited.
In this situation, our method will consider it acceptable. 

In summary, our method has high detection efficiency for unknown threats in the actual environment and has strong practical significance.


\section{Conclusion}
Aiming at the problem that the existing research methods are highly dependent on data sets and not suitable for detecting a large number of unknown DNS tunnel, we study the distribution of DNS traffic and find that the length and frequency of DNS domain names obey the normal distribution. Then some hypotheses are put forward, and three theorems and theoretical proofs are given.       
On this basis, DNS traffic is modeled and a complete theoretical detection framework is constructed. This framework can be used for DNS traffic detection in a real environment and has strong practical significance. 

The limitation of our solution is that it need to get a large amount of DNS data beforehand, so it can not detect DNS tunnels in real time. It can not replace IDS and other DNS tunneling detection methods. But in the field of security, in a view of historical audit and important threat discovery, our method can show good performance even facing a lot of unknown data. This is not satisfied by other methods.

\section{Conflicts of interests}
The authors declare that they have no competing interests.

\section{Data Availability}
Research Data Related to this Submission

Title1: normal traffic dataset

Repository: CTU dataset

url: https://www.stratosphereips.org/datasets-normal

Due to the lack of tunnel traffic data, we will upload the tunnel traffic we generated to GitHub for use by other researchers.

Title2: tunnel traffic dataset

Repository: tunnel dataset

url: https://github.com/maxsen/tunnel\_data

\section{Acknowledgements}
This work is supported by the grants from the National Key Research and Development Program of China(Project No.2018YFB0805000)




\hfil
\hfil
\bibliographystyle{plain}
\bibliography{online}

\end{document}